
\documentclass{llncs}
\usepackage{setspace}
\usepackage{amssymb,amsmath}
\usepackage{url}
\usepackage{graphicx}
\usepackage{float}
\usepackage{algorithmic}
\usepackage{algorithm}
\usepackage{color}
\usepackage{amsmath}
\usepackage{amssymb}


\usepackage{epsfig}
\usepackage{epstopdf}
\usepackage{amsmath}
\usepackage{amssymb}

\pagestyle{plain}




\newcommand{\myparagraph}[1]{\par\smallskip\par\noindent{\bf{}#1:~}}

\newcommand{\be}{\begin{equation}}
\newcommand{\ee}{\end{equation}}

\def \1{1 \!\! 1}

\newcommand{\cJ}{{\cal J}}
\newcommand{\cM}{{\cal M}}

\newcommand{\hepsilon}{h}

\newcommand{\remove}[1]{}
\newcommand{\comment}[1] {}

\newtheorem{observation}[definition]{Observation}


\newcommand{\cm}[1]{}















\def \ee   {\varepsilon}

%
\setstretch{1.00}


\newlength{\tablength}
\newlength{\spacelength}
\settowidth{\tablength}{\mbox{\ \ \ \ \ \ \ \ }}
\settowidth{\spacelength}{\mbox{\ }}

\newcommand{\tabstar}{\hspace*{\tablength}}
\newcommand{\spacestar}{\hspace*{\spacelength}}
\def\obeytabs{\catcode`\^^I=\active}
{\obeytabs\global\let^^I=\tabstar}
{\obeyspaces\global\let =\spacestar}
\newenvironment{display}{\begingroup\obeylines\obeyspaces\obeytabs}{\endgroup}
\newenvironment{prog}{\begin{display}\parskip0pt\sf}{\end{display}}


\begin{document}

\title{Tighter Bounds for Makespan Minimization on Unrelated Machines
}

\author{
Dor Arad\inst{1}
\and Yael Mordechai\inst{1}
 \and Hadas Shachnai\inst{1}}

\institute{{Computer Science Department, Technion, Haifa 32000, Israel.
 \mbox{\tt \{dorarad@tx,myael@cs,hadas@cs\}.technion.ac.il}.}
}

\maketitle
\begin{abstract}
We consider the problem of scheduling $n$ jobs to minimize the makespan on $m$ unrelated
machines, where job $j$ requires time $p_{ij}$ if processed on machine $i$.
A classic algorithm
of Lenstra et al. \cite{LST90} yields the best known approximation ratio of $2$ for the problem.
Improving this bound has
been a prominent open problem for over two decades.

In this paper we obtain a tighter bound for
a wide subclass of instances which can be
identified efficiently.
Specifically, we define the {\em feasibility factor}
of a given instance as
the minimum fraction of machines on which each job can be processed.
We show that there is a polynomial-time algorithm that, given values
$L$ and $T$,
and an instance having a {\em sufficiently large} feasibility factor $\hepsilon \in (0,1]$,
either proves that no schedule of mean machine completion time $L$ and makespan $T$ exists, or else finds a
schedule of
makespan at most $T + L/\hepsilon$ which is smaller than $2T$ for a wide class of instances.

For the restricted version of the problem, where for each job $j$ and machine $i$, $p_{ij} \in \{p_j, \infty\}$,
we show that a simpler algorithm yields a better bound, thus improving for highly feasible instances
 the best known ratio of  $33/17 + \epsilon$, for any fixed $\epsilon >0$, due to
Svensson \cite{S12}.
\end{abstract}
%



\section{Introduction}
\label{sec:intro}

In the problem of scheduling on unrelated parallel machines, we are given
a set ${\cJ}$ of jobs to be processed without interruption on a set ${\cM}$ of unrelated
machines, where the time a machine $i \in {\cM}$ needs to process a job $j \in {\cJ}$ is specified by
a machine and job dependent processing time $p_{ij} \geq 0$.
When considering a scheduling
problem, the most common and perhaps most natural objective function is makespan
minimization. This is the problem of finding
a schedule (or, an assignment),
$\alpha:{\cJ} \rightarrow {\cM}$,
so as to minimize the time $max_{i \in {\cM}} \sum_{j \in \alpha^{-1}(i)} p_{ij} $ required to process all
jobs.
A classic result in scheduling theory is the Lenstra-Shmoys-Tardos $2$-approximation
algorithm for this fundamental problem \cite{LST90}. Their approach is based on several nice
structural properties of the extreme point solutions of a natural linear program and
has become a textbook example of such techniques (see, e.g., \cite{V01}).  Complementing
their positive result, they also proved that the problem is NP-hard to approximate
within a factor less than $3/2$, even in the restricted case (also known as the {\em restricted assignment
problem}), where $p_{ij} \in \{p_j, \infty \}$.
Despite being a prominent open problem in scheduling theory, there has been
very little progress on either the upper or lower bound since the publication of \cite{LST90}
over two decades ago, with the exception being the
recent beautiful result of Svensson \cite{S12}, showing
for the restricted case
an upper bound of $33/17+\epsilon$, for an arbitrarily small constant $\epsilon >0$.

In this paper we show that these
best known bounds can
be tightened for
a wide subclass of instances,
which can be identified efficiently.
Specifically, we define the {\em feasibility factor}
of a given instance as
the minimum fraction of machines on which each job can be processed.
We show that there is a polynomial-time algorithm that, given values
$L$ and $T$,
and an instance having a {\em sufficiently large} feasibility factor $\hepsilon \in (0,1]$,
either proves that no schedule of mean machine completion time $L$ and makespan $T$ exists, or else finds a
schedule of
makespan at most $T + L/\hepsilon < 2T$.
For the restricted assignment problem,
we show that a simpler algorithm yields a better bound, thus
enabling to improve
 for highly feasible instances
 the best known ratio of $33/17 + \epsilon$
of \cite{S12}.

We note that the feasibility factor $\hepsilon$ of a given instance has been used before, both
for improving upper bounds (see, e.g., \cite{ETKS08}) and for showing the hardness
of certain subclasses of instances \cite{JW11}. However, these previous studies focus on specific values
of $\hepsilon$ (see below).
Our study takes a different approach in
exploring the decrease in the makespan that can be
achieved, by identifying instances in which $\hepsilon$ is sufficiently large (see Section \ref{sec:results}).

\subsection{Prior Work}
Minimizing the makespan on unrelated parallel machines
has been extensively studied for almost four decades.
Lenstra et al. \cite{LST90}
introduced an LP-based polynomial time $2$-approximation algorithm for the problem.
They also
proved that unless $P=NP$, there is
no polynomial time approximation algorithm
with approximation factor better than $\frac{3}{2}$.
Gairing et al. \cite{GMW07} presented a combinatorial $2$-approximation algorithm for the problem.
Shchepin and Vakhania \cite{SV05} showed that the rounding technique used in \cite{LST90}
can be modified to derive an improved ratio
a factor of
 $2-\frac{1}{m}$.

Shmoys and Tardos \cite{ST93} further developed the technique of \cite{LST90} to obtain an  approximation ratio
of $2$
 for the {\em generalized assignment problem (GAP)}, defined as
follows. We are given a set of jobs, ${\cJ}$, and a set of unrelated machines, ${\cM}$.
Each job is to be processed by exactly one machine;
processing job $j$ on machine $i$ requires time $p_{ij}>0$ and
incurs a cost of $c_{ij}>0$.
Each machine $i$ is available for $T_i$ time units, and the objective is to minimize
the total cost incurred.
The paper \cite{ST93} presents a polynomial time algorithm that, given values $C$
and $T$, finds a schedule of cost at most $C$ and makespan at most
$2T$, if a schedule of cost $C$ and makespan $T$ exists. This is
 the best known result to date for GAP.

For the restricted assignment problem,
Gairing et al. \cite{GLMM04} presented a combinatorial $2-\frac{1}{p_{max}}$-approximation algorithm
based on flow techniques, where $p_{max} = \max_j p_j$ is the maximum processing time of any job in the given instance.
The best known approximation ratio is
$\frac{33}{17}+\epsilon=1.9412+\epsilon$, due to Svensson \cite{S12}.

Interestingly, the feasibly factor of a given instance served as
a key component in deriving
two fundamental results
for the restricted assignment problem.
Ebenlendr et al. \cite{ETKS08} showed that the
subclass of instances for which
$\hepsilon = 2/m$ admits an approximation factor of $1.75$, thus improving for this subclass the general
bound of $2$. The same
problem,
also called {\em unrelated graph balancing},
was studied
by Verschae and Wiese \cite{JW11}. They showed that, in fact, this surprisingly simple subclass of instances
constitutes the core difficulty for the {\em linear programming} formulation of the
problem, often used as a first step in obtaining approximate solutions. Specifically, they
showed that
already for this basic setting, the strongest
known LP-formulation, namely,  the configuration-LP, has an integrality gap
of $2$.

\subsection{Our Contribution}
\label{sec:results}
In this paper we improve the best known bounds for makespan minimization
on unrelated parallel machine, for a wide subclass of instances possessing high
feasibility factor.
In particular, in Section \ref{sec:general}, we
 show that there is a polynomial-time algorithm that, given values
$0 <L < T$,
and an instance $I$ having a feasibility factor $L/T \leq \hepsilon$,
either proves that no schedule of mean machine completion time $L$ and makespan $T$ exists, or else finds a
schedule of
makespan at most $T + L/\hepsilon < 2T$.

For the restricted assignment problem, let $L= \frac{\sum_{j=1}^n p_j}{m}$ be the mean machine completion time of {\em any} schedule.
Then we show that 
there is an $O(m^2n)$ time algorithm that, given 
an instance $I$ whose feasibility factor satisfies $L/p_{max} = q < \hepsilon$,
finds a schedule of
makespan at most $p_{max} + L/\hepsilon  < (1+q/h)OPT$,
where $OPT$ is the makespan of an optimal schedule.
\comment{
We show (in Section \ref{sec:restricted}) that
for any instance
with a feasibility factor satisfying $L/p_{max} = q < \hepsilon$,
there is an $O(m^2 n)$ time algorithm that outputs a makespan at most $p_{max} + L/\hepsilon  < (1+q)OPT$,
where $OPT$ is the makespan of an optimal schedule.
}
Thus, for $q \leq \frac{16}{17} \hepsilon$, this improves the bound of $33/17 + \epsilon$ of \cite{S12}.

\myparagraph{Techniques} Our algorithms rely heavily on the fact that the given instances are highly feasible, and
thus, the schedules can be better balanced to decrease the latest completion time of any job.
Our algorithm for general instances first uses as a subroutine 
an algorithm of \cite{ST93}, thus also identifying highly feasible instances. It then applies on the resulting schedule a {\em balancing} phase.
Using some nice properties of this schedule, the algorithm moves long jobs
from overloaded to underloaded machines, while decreasing the makespan of the schedule.

Our bound for the restricted assignment problem builds on a result of Gairing at al. \cite{GLMM04}, who gave an
algorithm based on flow techniques for general instances of the problem. Their algorithm starts with a
schedule satisfying certain properties and gradually improves the makespan until it is guaranteed to yield a ratio of $2$ to the
optimal. The main idea is to use some parameters for partitioning the machines into three sets: overloaded,
underloaded, and all the remaining machines. The makespan is improved by moving jobs from overloaded to underloaded machines on augmenting path
in the corresponding flow network.
We adopt this approach and show that by a {\em good} selection of the parameters defining the three machine sets, we can obtain
the desired makespan. Consequently, our algorithm is simpler and has better running time than the algorithm of
\cite{GLMM04} (see Section \ref{sec:restricted}).

\comment{
In this paper we generalize the $2$-approximation result of \cite{LST90} by formulating the makespan of a given schedule
 in terms of these parameters and the makespen $T$ of
an optimal schedule.
Thus, given an input instance $I$ for scheduling on unrelated machines, and some fixed positive values $L,T$,
we show that if there exists an optimal schedule of makespen $T$ and average machine load $L$, then there is
a polynomial time algorithm
whose makespan is at most
 $\min \{ 2, { 1+\frac{L}{ \hepsilon T}} \} $ times the optimum, where $\hepsilon$ is the feasibility parameter of $I$.
For $\hepsilon\in (0, L/T]$,
 this yields a ratio of $2$ as given in \cite{LST90}. Otherwise,
 the ratio is strictly better. We present an LP-based algorithm and a combinatorial algorithm
that achieve this ratio.
}

\comment{
 and unrelated restricted machines.
In our work, we noticed that a known algorithms for the problem can be further developed for further balancing the resulting assignment and achieving by that a better makespan bound. In both cases we show in this article, the reasons for which we can further balance the assignments are the same.
In order to give the reader some intuition, we give an intuitive description of our technique.
Consider the cases where the resulting assignment is quite unbalanced, i.e., there are machines that finish late and there are machines that finish early, an attribute that is reflected in the gap between the average machine load of the assignment and its makespan. In this case, it is quite natural that we wish to further balance the assignment by transferring jobs from machines that finish late to machines that finish early. The problem is that the machines are unrelated and so a job that we wish to transfer from machine $i_1$ to machine $i_2$ can be very large on machine $i_2$.
In order to further balance the assignment we need a certain amount of "assigning-freedom", i.e., to be able to transfer a job from a machine that finish late, to a machine that finish early and that the transferred job won't be too long on it.
This was the point we had to define the feasibility factor of the assignment. We say that a machine is illegal for a job if the processing time of the job on that machine is too long (above a certain threshold which we define in the next sections) on that machine.
The feasibility factor is the minimal fraction of machines on which a job is legal.
In these terms, in order to further balance the assignment we need the feasibility factor to be large enough.
We apply our new technique in two algorithms, an LP based algorithm for the general unrelated case and a combinatorial algorithm for the restricted case, and show how they both yield an improved approximation factor.
We show that, given a scheduling input which admits a makespan of at most $T$ and average machine load at most $L$, our algorithm returns an assignment of makespan at most $min\left\{ T+\frac{L}{\hepsilon},2T\right\} $, where each job $j$ is feasible on at least $\hepsilon\cdot m$ machines.

One can think that reducing $L$ artificially with respect to $T$
(like adding dummy machines on which no job is feasible) will reduce
our bound $T+\frac{1}{\hepsilon}L$ , but actually this will reduce
$\epsilon$ as well and therefore balance the bound $T+\frac{L}{\epsilon}$.
This is also our main motivation in considering this feasibility parameter
$\epsilon$.
}
\section{Preliminaries}
\label{sec:prel}

An \emph{assignment} of jobs to machines is given by a function
$\alpha:{\cJ} \rightarrow {\cM}$. Thus, $\alpha(j)=i$ if job $j$ is
assigned to machine $i$. For any assignment $\alpha$, the \emph{load}
$\delta_{i},$ on machine $i$, given a matrix of processing times $\mathbf{P}$,
is the sum of processing times for the jobs that were assigned to
machine $i$, thus $\delta_{i}(\mathbf{P},\alpha)=\sum_{j\in {\cJ}:\alpha(j)=i}p_{ij}$.
The \emph{makespan} of an assignment $\alpha$ is the maximum load on any machine.
Also, the \emph{average machine load} (or, mean machine completion time) is given by $L=\frac{\sum_{i\in {\cM}}\delta_{i}(\mathbf{P},\alpha)}{m}$.

Given the matrix $\mathbf{P}$ and the value $T>0$, we say that a machine
$i$ is \emph{legal }for job $j$ if $p_{ij}\leq T$.
Thus, any job $j \in {\cJ}$ can be assigned to at least $\hepsilon m$ machines in ${\cM}$.
The \emph{feasibility factor } of $\mathbf{P}$ is $\hepsilon(T) =\frac{min_{j\in {\cJ}}\left|\left\{ i\in {\cM} \,:i\,\mbox{is legal for \ensuremath{j}}\right\} \right|}{m}$.

Given an assignment $\alpha$ for a matrix $\mathbf{P}$, and a constant $\gamma\geq1$, we denote by $Bad(\mathbf{P},\alpha,\gamma)$
the set of machines that complete processing after time $T+\gamma\cdot L$, by
$Good(\mathbf{P},\alpha,\gamma)$ the set of machines that complete
by time $\gamma\cdot L$, and by $Good_{j}(\mathbf{P},\alpha,\gamma)$
the set of machines from $Good(\mathbf{P},\alpha,\gamma)$ that are
legal for job $j$. Also we denote by $j_{max}^{i}(\mathbf{P},\alpha)=argmax\left\{ p_{ij}:\,\alpha(j)=i\right\} $
the largest job on machine $i\in Bad(\mathbf{P},\alpha,\gamma)$.
We denote by $G_{\alpha,\gamma}(\mathbf{P},\alpha)$ the bipartite
graph $\left(\left(Bad(\mathbf{P},\alpha,\gamma),Good(\mathbf{P},\alpha,\gamma)\right),E(\mathbf{P},\alpha)\right)$
where $E(\mathbf{P},\alpha)$ consists of all edges $(i,i^{'})$ where
$i^{'}$ is a legal machine for $j_{max}^{i}(\mathbf{P},\alpha)$.
We say that machine $i$ is\emph{ good }if $i\in Good(\mathbf{P},\alpha,\gamma)$.
We say that machine $i$ is\emph{ bad }if $i\in Bad(\mathbf{P},\alpha,\gamma)$.
Machine $i$ is\emph{ good for job $j$ }if $i\in Good_{j}(\mathbf{P},\alpha,\gamma)$.
We omit $\mathbf{P}$ in the notation if it is clear
from the context.

\section{Approximation Algorithm for General Instances}
\label{sec:general}
The problem of scheduling on unrelated machines can be viewed as a
special case of the \emph{generalized assignment problem} in which $c_{ij}=0$ for
all $1\leq i\leq m$ and $1\leq j\leq n$. The best known result for
the generalized assignment problem is due to \cite{ST93}. They presented
a polynomial time algorithm that, given values $C$ and $T$, finds
a schedule of cost at most $C$ and makespan at most $2T$, if a schedule
of cost $C$ and makespan $T$ exists. This implies also the best
result for scheduling on unrelated machines - a schedule of makespan
at most twice the optimum.
In our result we use the rounding technique of \cite{ST93} with the costs being the processing times. This allows us to bound the average machine's completion time of the resulting assignment. A bound that is essential for our result.
In the next Section we give an overview of the rounding technique of \cite{ST93}.

\subsection{Overview of the Algorithm of Shmoys and Tardos}
We describe below the technique
used in \cite{ST93} for solving the generalized assignment problem. Let $\mathbf{P}$ and
$\mathbf{C}$ denote the matrix of processing times and the matrix of costs, and let $T$  and $C$
 be fixed positive integers. Let the indicator variables $x_{ij}$, $i=1,2,...,m$ , $j=1,2,...,n$ denote whether job $j$
is assigned to machine $i$. Then the linear programming relaxation
of the problem is as follows:

\begin{center}
\begin{minipage}{0.95\textwidth}
\begin{eqnarray*}
LP(\mathbf{P},\mathbf{C},T,C): & \displaystyle{\sum_{i=1}^{m}\sum_{j=1}^{n}c_{ij}x_{ij}\leq C} \hbox{~~~~~~~~~~~~~~~~~~~~~~~~~~~~~~~~ }\\
& \displaystyle{\sum_{i=1}^{m}x_{ij}=1}, \hbox{~~~~~~~~~~ for \ensuremath{j=1,...,n} ~~~~~~~~ }  \\
& \displaystyle{\sum_{j=1}^{n}p_{ij}x_{ij}\leq T}, \hbox{~~~~~ for \ensuremath{i=1,...,m} ~~~~~~~~ } \\
& ~~~~~x_{ij}\geq0, \hbox{~~~~~~~~~~~~~~~~ for  \ensuremath{i=1,...m\,,j=1,...,n} } \\
& ~~~~x_{ij}=0, ~if ~ p_{ij}>T, \hbox{ for \ensuremath{i=1,...m\,,j=1,...,n}} \\
&&
\end{eqnarray*}
\end{minipage}
\end{center}

\comment{
\begin{eqnarray*}
LP(\mathbf{P},T,C): & \sum_{i=1}^{m}\sum_{j=1}^{n}c_{ij}x_{ij}\leq C\\
 & \sum_{i=1}^{m}x_{ij}=1 & \mbox{for \ensuremath{j=1,...,n}}\\
 & \sum_{j=1}^{n}p_{ij}x_{ij}\leq T & \mbox{for \ensuremath{i=1,...,m}}\\
 & x_{ij}\geq0 & \mbox{for  \ensuremath{i=1,...m\,,j=1,...,n}}\\
 & x_{ij}=0 & \mbox{if \ensuremath{p_{ij}>T\,,i=1,...m\,,j=1,...,n}}
\end{eqnarray*}
}

Let $x_{ij}$, $i=1,2,...,m$ , $j=1,2,...,n$ be the fractional solution, and let $k_{i}=\left\lceil \Sigma_{j=1}^{n}x_{ij}\right\rceil $. Each machine is partitioned into $k_{i}$ sub-machines $v_{i,s}$,
$s=1,...,k_{i}]$. The rounding is done by finding
a minimum-cost perfect matching between all jobs and all sub-machines.
Formally, a bipartite graph $B(W,V,E)$ is constructed, with $W=\left\{ w_{j}:j=1,...,n\right\} $
the job nodes, $V=\left\{ v_{is}:i=1,..,m,\, s=1,...,k_{i}\right\} $
the sub-machines nodes. Each node $v_{is}$ can be viewed as a bin
of volume 1, and we add an edge $(w_{j},v_{is})$ with cost
$c_{ij}$ iff a positive fraction of $x_{ij}$ is packed in the bin $v_{is}$. For every machine $i=1,..,m$ the jobs are sorted in non-increasing order of their processing time on $i$. Then, the bins $v_{i1},...,v_{ik_{i}}$ are packed one by one, with the values
$x_{ij}>0$ by the order of the jobs. While $v_{is}$ is not totally
packed, we continue packing the $x_{i,j}$s such that if $x_{ij}$
fits $v_{is}$ it is packed to $v_{is}$, else, only a fraction $\beta$
of $x_{ij}$ is packed to $v_{is}$, consuming all the remaining volume
of $v_{is},$; the remaining part of $(1-\beta)x_{ij}$ is packed in $v_{i,s+1}$.
Then the rounding is done by taking a minimum-cost integer matching
$M$, that matches all job nodes, and for each edge $(w_{j},v_{is})\in M$
, set $x_{ij}=1,$ i.e., schedule job $j$ on machine $i$.

The resulting schedule has the following nice property, that is used below for deriving our result.

\begin{lemma}\cite{ST93}
\label{lemma:small-jobs_sum}
Let $\alpha$ be the assignment obtained by the algorithm and let $j_{max}^{i}=max\left\{ p_{ij}:\alpha(j)=i\right\}$ the longest jobs that was assigned to $i$. Then,
for all $1\leq i\leq m$, $\sum_{j:\alpha(j)=i,\, j\neq j_{max}^{i}}p_{ij}\leq T$.
\end{lemma}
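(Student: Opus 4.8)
The plan is to follow the structure of the Shmoys--Tardos rounding and to track, on each machine $i$ separately, how processing time is distributed across its bins $v_{i,1},\dots,v_{i,k_i}$. For a bin $v_{i,s}$, let $\mu_{i,s}$ and $\nu_{i,s}$ be, respectively, the smallest and the largest processing time $p_{ij}$ among the jobs $j$ that have a positive fraction packed into $v_{i,s}$ (equivalently, among the endpoints of edges incident to $v_{i,s}$). The heart of the argument is the monotonicity inequality $\nu_{i,s+1}\le\mu_{i,s}$ for every $s<k_i$. This is where I would use that, for each machine, the jobs are packed into the bins in non-increasing order of processing time: the jobs touching a given bin form a contiguous block of the sorted list, and two consecutive bins can share at most the single job that was split across their common boundary. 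That shared job is the smallest one touching $v_{i,s}$ and the largest one touching $v_{i,s+1}$, which gives $\nu_{i,s+1}=\mu_{i,s}$; when no job is split, the sorted order gives the inequality with room to spare.

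From here I would read off what the integral matching $M$ produces. Each bin is matched to at most one job, and only along an existing edge, so the processing time $q_s$ of the job placed on $v_{i,s}$ obeys $\mu_{i,s}\le q_s\le\nu_{i,s}$. Chaining this with the monotonicity yields $q_s\ge\mu_{i,s}\ge\nu_{i,s+1}\ge q_{s+1}$, so the values $q_s$ are non-increasing in $s$. Consequently $j_{max}^i$, the longest job assigned to $i$, is the one matched to the first bin $v_{i,1}$, and the jobs to be bounded, namely all of machine $i$'s jobs except $j_{max}^i$, are exactly those matched to $v_{i,2},\dots,v_{i,k_i}$.

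To bound their total by $T$ I would use that $k_i=\ceil{\sum_j x_{ij}}$ forces each of the bins $v_{i,1},\dots,v_{i,k_i-1}$ to be packed to its full unit volume. Writing $P_s$ for the fractional processing time accumulated in $v_{i,s}$, every job touching a full bin has processing time at least $\mu_{i,s}$ and the fractions there sum to $1$, so $P_s\ge\mu_{i,s}$ for $s\le k_i-1$. Combining the pieces with the LP load constraint $\sum_s P_s=\sum_j p_{ij}x_{ij}\le T$ gives
\[
\sum_{s=2}^{k_i}q_s\;\le\;\sum_{s=2}^{k_i}\nu_{i,s}\;\le\;\sum_{s=1}^{k_i-1}\mu_{i,s}\;\le\;\sum_{s=1}^{k_i-1}P_s\;\le\;T ,
\]
which is exactly the asserted bound $\sum_{j:\alpha(j)=i,\,j\neq j_{max}^i}p_{ij}\le T$.

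The step I expect to be the real obstacle is the monotonicity inequality $\nu_{i,s+1}\le\mu_{i,s}$: it is the only place that genuinely uses the greedy, sorted packing rule, and it requires careful bookkeeping of the at-most-one job that straddles the boundary between consecutive bins. Once this is in hand, the remaining inequalities are immediate from the matching structure and the single constraint $\sum_j p_{ij}x_{ij}\le T$, together with the observation that only the first $k_i-1$ bins need to be full.
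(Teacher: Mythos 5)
Your proof is correct and is essentially the argument of Shmoys--Tardos itself: the paper states this lemma with a citation to \cite{ST93} rather than proving it, and your reconstruction (monotonicity $\nu_{i,s+1}\le\mu_{i,s}$ from the sorted greedy packing, fullness of bins $v_{i,1},\dots,v_{i,k_i-1}$ giving $\mu_{i,s}\le P_s$, and the LP constraint $\sum_s P_s\le T$) is exactly that source's proof. One small point to tidy: the matching need not use every bin, so instead of asserting that $j_{max}^{i}$ is matched to $v_{i,1}$, argue it sits on the lowest-indexed \emph{matched} bin (the same monotonicity chain gives this), after which your final inequality $\sum_{s=2}^{k_i}q_s\le\sum_{s=1}^{k_i-1}P_s\le T$ goes through unchanged, since $q_s\le\nu_{i,s}$ holds vacuously for unmatched bins.
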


\subsection{Approximation Algorithm}

Consider the special case of the generalized assignment problem in which the costs satisfy $c_{ij}=p_{ij}$,
for all $i=1,...,m$ and $j=1,...,n$. For any instance $\mathbf{P}$
and the constants $T$ and $L\leq T$, integral solutions to the
linear program, $LP(\mathbf{P},\mathbf{P},T,L)$, are in one-to-one
correspondence with schedules of makespan at most $T$, and average
machine load $L$.

Thus, the result of \cite{ST93} guarantees that if $LP(\mathbf{P},\mathbf{P},T,L)$
has a feasible solution, then there exists a schedule that has makespan
at most $2T$ and average machine load $L$.
Our main result is the following.

\begin{theorem}
\label{thm:approx_ratio}
Let $T$ and $L\leq T$ be some fixed positive values for a given instance $\mathbf{P}$ of the scheduling problem, let $\hepsilon\left(T\right)=\hepsilon\in\left(0,1\right] $ be the feasibility factor of $\mathbf{P}$.If $LP(\mathbf{P},T,L)$ has a feasible solution, then there is an algorithm that achieves a makespan of $min\left\{ T+\frac{L}{\hepsilon},2T\right\}$.
\end{theorem}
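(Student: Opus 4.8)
The plan is to run the Shmoys--Tardos rounding as a black box and then repair the resulting schedule by relocating the single ``large'' job that each overloaded machine carries. First I would apply the rounding of \cite{ST93} to a feasible solution of $LP(\mathbf{P},\mathbf{P},T,L)$, i.e.\ with costs equal to processing times; since that rounding does not increase the cost (the total processing time), the integral assignment $\alpha$ it returns still has average machine load at most $L$. Moreover, by Lemma~\ref{lemma:small-jobs_sum} we have $\sum_{j\neq j_{max}^i}p_{ij}\le T$ on every machine $i$, and because the LP forces $x_{ij}=0$ whenever $p_{ij}>T$, each assigned job (in particular every $j_{max}^i$) has processing time at most $T$ on its machine; hence $\alpha$ already has makespan at most $2T$.

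If $\hepsilon\le L/T$ then $T+L/\hepsilon\ge 2T$, so $\min\{T+L/\hepsilon,2T\}=2T$ is already attained and nothing more is needed; thus assume $\hepsilon>L/T$, equivalently $\gamma L<T$ for $\gamma=1/\hepsilon$. With this $\gamma$ I would split the machines into $Good$ (load $\le\gamma L=L/\hepsilon$), $Bad$ (load $>T+\gamma L=T+L/\hepsilon$), and the remaining medium machines, and build the bipartite graph $G_{\alpha,\gamma}$. The repair step is to find a matching saturating $Bad$ in $G_{\alpha,\gamma}$ and move each $j_{max}^i$ off its bad machine $i$ onto its matched good machine $i'$. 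A bad machine then drops to load at most $T$ by Lemma~\ref{lemma:small-jobs_sum}, while a good machine receives at most one job, which is legal for it and therefore of size at most $T$, so its load rises to at most $L/\hepsilon+T=T+L/\hepsilon$; medium and untouched good machines are already at most $T+L/\hepsilon$. The final makespan is thus $T+L/\hepsilon$, as claimed.

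The hard part will be proving that a saturating matching exists, i.e.\ verifying Hall's condition for $Bad$, and this is exactly where high feasibility is used. Writing $B=|Bad|$, $M=|Medium|$, $g=|Good|$, every non-good machine has load exceeding $L/\hepsilon$ and every bad machine exceeds $T+L/\hepsilon$, so the average-load bound gives
\[
mL \;\ge\; B\Bigl(T+\tfrac{L}{\hepsilon}\Bigr)+M\,\tfrac{L}{\hepsilon},
\]
which rearranges to $B+M\le \hepsilon m-\hepsilon BT/L$. Since $\hepsilon>L/T$ makes the correction term non-positive, adding $B$ yields $2B+M\le\hepsilon m$, equivalently $g\ge B+(1-\hepsilon)m$. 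On the other side, for any nonempty $S\subseteq Bad$ and any $i\in S$, the job $j_{max}^i$ is legal on at least $\hepsilon m$ machines, of which at most $m-g$ are non-good, so $Good_{j_{max}^i}\subseteq N(S)$ forces $|N(S)|\ge g-(1-\hepsilon)m\ge B\ge|S|$; hence Hall's condition holds. The subtlety to watch is that the factor $2$ in $2B+M\le\hepsilon m$ is precisely what the hypothesis $\hepsilon>L/T$ buys, so the counting is tight and the argument genuinely breaks down once the instance is only weakly feasible.
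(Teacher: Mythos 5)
Your proposal is correct and follows essentially the same route as the paper: apply the Shmoys--Tardos rounding with $c_{ij}=p_{ij}$ to get makespan $\le 2T$ and average load $\le L$, dispose of the case $\hepsilon\le L/T$ (where $T+L/\hepsilon\ge 2T$), and repair the schedule by matching each bad machine to a good machine legal for its largest job, verifying Hall's condition by the same load-counting argument and bounding the post-transfer loads via Lemma~\ref{lemma:small-jobs_sum}. Your direct derivation of $2B+M\le\hepsilon m$, hence $g\ge B+(1-\hepsilon)m$ and $|N(S)|\ge g-(1-\hepsilon)m\ge B\ge|S|$, merely streamlines (and in fact cleans up the algebra of) the paper's two auxiliary lemmas bounding $|Good(\alpha,\tfrac{1}{\hepsilon})|$ and $|Good_{j}(\alpha,\tfrac{1}{\hepsilon})|$, so the mathematical content is identical.
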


We prove the theorem by describing an algorithm that converts
a feasible solution for $LP(\mathbf{P},\mathbf{P},T,L)$ to the desired schedule.
The first step of the algorithm is to apply the rounding technique
of \cite{ST93} to obtain an assignment $\alpha$ that admits
makespan at most $2T$ and average machine load at most $L$. Next,
the algorithm fixes the assignment to achieve a makespan of at most
$T+\frac{L}{\hepsilon}$. This is done by transferring
the largest job from each machine $i\in Bad(\mathbf{P},\alpha,\frac{1}{\hepsilon})$
to a machine $i^{'}\in Good(\mathbf{P},\alpha,\frac{1}{\hepsilon})$.
To prove that all transfers are possible, we show that
there exists a perfect matching between the bad machines and the good
machines. Formally, we prove that there exists a perfect matching
in $G_{\alpha,\frac{1}{\hepsilon}}(\mathbf{P},\alpha)$. We first prove the following lemmas.

\begin{lemma}
\label{enough_good_machines}
Let $\alpha$ be an assignment that admits a makespan of at most $2T$
and average machine load \emph{ $L\leq T$. } Let $\gamma\geq1$
and assume $\left|Bad(\alpha,\gamma)\right|=k$. Then
\begin{enumerate}
\item $k<\frac{m}{\gamma+1}$.
\item $\left|Good(\alpha,\gamma)\right|>\left(1-\frac{1}{\gamma}\right)\cdot m+\frac{k}{\gamma}\cdot\frac{T}{L}$.\end{enumerate}

\end{lemma}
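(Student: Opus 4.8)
The plan is to exploit the single global identity that links all machines: because $L$ is the average machine load, the total load is pinned down exactly, $\sum_{i\in\cM}\delta_i = mL$. Both claims then follow by lower-bounding this fixed sum via the load thresholds that define the machine classes. Set $g=|Good(\alpha,\gamma)|$, and let $R=\cM\setminus\bigl(Bad(\alpha,\gamma)\cup Good(\alpha,\gamma)\bigr)$ be the ``intermediate'' machines, so $|R|=m-k-g$. By definition each of the $k$ bad machines has load strictly greater than $T+\gamma L$; each good machine has load at most $\gamma L$ (hence at least the trivial bound $0$); and each machine in $R$ is not good, so its load is strictly greater than $\gamma L$.

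For part (i), I would discard every machine except the bad ones. Since each bad machine contributes more than $T+\gamma L$,
\begin{equation*}
mL=\sum_{i\in\cM}\delta_i \;\ge\; \sum_{i\in Bad(\alpha,\gamma)}\delta_i \;>\; k(T+\gamma L),
\end{equation*}
whence $k<\dfrac{mL}{T+\gamma L}$. This is the one place the hypothesis $L\le T$ is needed: replacing $T$ in the denominator by its lower bound $L$ gives $k<\dfrac{mL}{(1+\gamma)L}=\dfrac{m}{\gamma+1}$, as claimed.

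For part (ii), I keep both the bad and the intermediate machines in the lower bound and throw away only the (nonnegative) good loads:
\begin{equation*}
mL=\sum_{i\in\cM}\delta_i \;\ge\; \sum_{i\in Bad(\alpha,\gamma)}\delta_i+\sum_{i\in R}\delta_i \;>\; k(T+\gamma L)+(m-k-g)\,\gamma L.
\end{equation*}
Expanding the right-hand side, the two $k\gamma L$ terms cancel, leaving $mL>kT+(m-g)\gamma L$. Dividing the linear inequality by $\gamma L>0$ and isolating $g$ yields exactly $g>\bigl(1-\tfrac1\gamma\bigr)m+\tfrac{k}{\gamma}\cdot\tfrac{T}{L}$, which is the stated bound.

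There is no genuine obstacle here beyond careful bookkeeping; the whole statement is an averaging/counting argument driven by the fixed total load $mL$. The two points that merit a little attention are: (a) verifying that the intermediate machines in $R$ really satisfy $\delta_i>\gamma L$, which is immediate since ``not good'' means the completion time exceeds $\gamma L$; and (b) preserving the strict inequalities throughout, which is legitimate because the bad machines are defined by a strict inequality on their load (and the degenerate case in which all machines are good is trivial, as then $g=m$ already exceeds the right-hand side). I note in passing that part (ii) does not invoke $L\le T$, so the structural inequality it provides is robust to that hypothesis, while part (i) relies on it essentially.
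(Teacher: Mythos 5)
Your proof is correct and follows essentially the same route as the paper: both arguments lower-bound the total load $mL$ by the thresholds $T+\gamma L$ on bad machines and $\gamma L$ on the intermediate (not-good) machines, and part (i) uses $L\le T$ in exactly the same way. The only difference is presentational --- the paper assumes the negation of each claim and derives the contradiction $\sum_i \delta_i > mL$, whereas you rearrange the same inequalities directly to isolate $k$ and $g$; your direct version is slightly cleaner (and, incidentally, avoids some sign typos in the paper's displayed chain), but it is not a different proof.
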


\begin{proof}
Each machine $i\in Bad(\alpha,\gamma)$ has load greater than $T+\gamma\cdot L$,
therefore $\sum_{i\in M}\delta_{i}(\alpha)>k\cdot(T+\gamma\cdot L)$.
\begin{enumerate}
\item Assume that $k\geq\frac{m}{\gamma+1}$, then

\[
\begin{array}{ll}
\sum_{i\in M}\delta_{i}(\alpha) & >  k(T+\gamma L)\\
 & \geq \frac{m}{\gamma+1}(T+\gamma L)\\
 & = \frac{m}{\gamma+1}(T+\gamma L)\\
& {\geq}
 \frac{m}{\gamma+1}(L+\gamma L)\\
 & =  m\cdot L
\end{array}
\]

The last inequality follows from the fact that $T \geq L$.
Hence, the average machine load is greater than $L$, a contradiction.
It follows that $k<\frac{m}{\gamma+1}$.

\item Let $\left|Good(\alpha,\gamma)\right|=l$.
Then, there are $m-k-l$ machines having loads greater than $\gamma L$.
Assume that $l\leq\left(1-\frac{1}{\gamma}\right)m+\frac{k}{\gamma}\cdot\frac{T}{L}$, then

\[
\begin{array}{ll}
\sum_{i\in M}\delta_{i}(\alpha) & >  k(T+\gamma\cdot L)+\left(m-l-k\right)\gamma L\\
 & = kT+\left(m-l\right)\gamma L\\
 & \geq  kT+\left(m-\left(1-\frac{1}{\gamma}\right)m+\frac{k}{\gamma}\cdot\frac{T}{L}\right)\gamma L\\
 & \geq kT+\left(\frac{m}{\gamma}+\frac{k}{\gamma}\cdot\frac{T}{L}\right)\gamma L\\
 & =  kT+\left(mL+kT\right)\\
 & \geq mL
\end{array}
\]

Hence, the average machine load is greater than $L$, a contradiction.
It follows that $\left|Good(\alpha,\gamma)\right|\geq\left(1-\frac{1}{\gamma}\right)\cdot m+\frac{k}{\gamma}
\cdot\frac{T}{L}$. \hfill \qed
\end{enumerate}
\end{proof}

\begin{lemma}
Let $\mathbf{P}$ be an instance of the scheduling problem. Let $\alpha$
be an assignment for $\mathbf{P}$ that admits a makespan of at most
$2T$ and average machine load \emph{ $L\leq T$}. If $\hepsilon\geq\frac{L}{T}$ then for every subset $A\subseteq Bad(\alpha,\frac{1}{\hepsilon})$
, $\left|N\left(A\right)\right|\geq\left|A\right|$, where $N\left(A\right)$
is the set of neighbors of $A$ in $G_{\alpha,\gamma}$.\end{lemma}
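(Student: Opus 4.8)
The plan is to verify Hall's marriage condition for the bipartite graph $G_{\alpha,\gamma}$ with $\gamma=\frac{1}{\hepsilon}$, since a perfect matching saturating the bad machines is exactly what we are after. The first observation is that, by definition of the edge set $E(\mathbf{P},\alpha)$, the neighborhood of a single bad machine $i$ is precisely $Good_{j_{max}^i}(\alpha,\gamma)$, the set of good machines that are legal for the largest job on $i$. Hence for any nonempty $A\subseteq Bad(\alpha,\frac{1}{\hepsilon})$ and any fixed $i\in A$ we have $N(A)\supseteq Good_{j_{max}^i}(\alpha,\frac{1}{\hepsilon})$, so it suffices to show that one such set is already large enough, namely that $\left|Good_{j_{max}^i}(\alpha,\frac{1}{\hepsilon})\right|\geq|A|$. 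This reduces the subset-by-subset Hall condition to a bound on a single job.

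Next I would bound $\left|Good_j(\alpha,\gamma)\right|$ from below for an arbitrary job $j$ by a counting argument. By the definition of the feasibility factor $\hepsilon=\hepsilon(T)$, the job $j$ is legal for at least $\hepsilon m$ machines; the good machines legal for $j$ are obtained from these by discarding the non-good machines, of which there are $m-\left|Good(\alpha,\gamma)\right|$. This yields
\[
\left|Good_j(\alpha,\gamma)\right|\geq \hepsilon m-\left(m-\left|Good(\alpha,\gamma)\right|\right)=\left|Good(\alpha,\gamma)\right|-(1-\hepsilon)m.
\]

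The final step is to substitute the lower bound on $\left|Good(\alpha,\gamma)\right|$ from Lemma \ref{enough_good_machines}(2). With $\gamma=\frac{1}{\hepsilon}$ (so $\frac{1}{\gamma}=\hepsilon$) and $k=\left|Bad(\alpha,\frac{1}{\hepsilon})\right|$, that lemma gives $\left|Good(\alpha,\frac{1}{\hepsilon})\right|>(1-\hepsilon)m+k\hepsilon\frac{T}{L}$. Plugging this in, the $(1-\hepsilon)m$ terms cancel and I obtain $\left|Good_j(\alpha,\frac{1}{\hepsilon})\right|>k\hepsilon\frac{T}{L}$. Here the hypothesis $\hepsilon\geq\frac{L}{T}$ is used crucially, since it gives $\hepsilon\frac{T}{L}\geq 1$, whence $\left|Good_j(\alpha,\frac{1}{\hepsilon})\right|>k=\left|Bad(\alpha,\frac{1}{\hepsilon})\right|\geq|A|$. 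Combined with the first paragraph this gives $|N(A)|\geq|A|$, as required.

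I expect the only delicate point to be the counting inequality for $\left|Good_j\right|$: it relies on the feasibility factor being defined with respect to the same threshold $T$ that defines legality, so that every job truly has at least $\hepsilon m$ legal machines, and on the fact that each $j_{max}^i$ is a genuine job of the instance to which this guarantee applies. Once that is in place, the cancellation driven by $\gamma=\frac{1}{\hepsilon}$ together with the single inequality $\hepsilon\frac{T}{L}\geq 1$ does all the work, and no genuine subset analysis of $A$ is needed; the neighborhood of even one vertex of $A$ already exceeds $|A|$.
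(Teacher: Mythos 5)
Your proposal is correct and follows essentially the same route as the paper's own proof: both lower-bound $\left|Good_j(\alpha,\frac{1}{\hepsilon})\right|$ by $\left|Good(\alpha,\frac{1}{\hepsilon})\right|-(1-\hepsilon)m$, substitute part (ii) of Lemma \ref{enough_good_machines} with $\gamma=\frac{1}{\hepsilon}$ so the $(1-\hepsilon)m$ terms cancel, and then use $\hepsilon\frac{T}{L}\geq 1$ to conclude $\left|N(A)\right|\geq\left|Good_{j_{max}^i}(\alpha,\frac{1}{\hepsilon})\right|>k\geq\left|A\right|$ for a single $i\in A$. The only difference is presentational (you reduce Hall's condition to a single-vertex neighborhood first, while the paper establishes the counting bound first), and your closing remark about the legality threshold matching the definition of $\hepsilon(T)$ correctly identifies the one assumption the paper uses implicitly.
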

\begin{proof}
Let $\left|Bad(\alpha,\frac{1}{\hepsilon})\right|=k$. Since the number
of illegal machines for any job $j$ is at most $(1-\hepsilon)m$, the number of good machines for job $j$ is at least the number
of good machines minus its illegal machines (the worst case where
all illegal machines for job $j$ form a subset of $Good(\alpha,\frac{1}{\hepsilon})$).
Together with Lemma \ref{enough_good_machines} we have

\[
\begin{array}{ll}
\left|Good_{j}(\alpha,\frac{1}{\hepsilon})\right| & \geq  \left|Good(\alpha,\frac{1}{\hepsilon})\right|-(1-\hepsilon)m\\
 & >  \left(1-\frac{1}{\gamma}\right)\cdot m+\frac{k}{\left(\frac{1}{\hepsilon}\right)}\cdot\frac{T}{L}-(1-\hepsilon)m\\
  & = \hepsilon\cdot k\frac{T}{L}\\
  & \geq k
\end{array}
\]

The last inequality follows from the fact that $\hepsilon\ge\frac{L}{T}$.
Now, let $A\subseteq Bad(\alpha,\frac{1}{\hepsilon})$. Then $\left|A\right|\leq\left|Bad(\alpha,\frac{1}{\hepsilon})\right|=k$.
Recall that the set of neighbors of $A$ is the set of machines that
are good for all the jobs $j_{max}^{i}$, $i\in A$, i.e., $N\left(A\right)=\cup_{i\in A}Good_{j_{max}^{i}}(\alpha,\frac{1}{\hepsilon})\subseteq Good(\alpha,\frac{1}{\hepsilon})$.
Obviously $\left|N\left(A\right)\right|=\left|\cup_{i\in A}Good_{j_{max}^{i}}(\alpha,\frac{1}{\hepsilon})\right|\geq\left|Good_{j_{max}^{i}}(\alpha,\frac{1}{\hepsilon})\right|$
for some $i\in A$. It follows from the above that $\left|N\left(A\right)\right|\geq k$.

Since $\left|A\right|\leq k$ we have that $\left|N\left(A\right)\right|\geq\left|A\right|$.
\hfill \qed
\end{proof}

By Hall's Theorem \cite{H35}, there exist a perfect matching in
$G_{\alpha,\frac{1}{\hepsilon}}$ {\em iff}  for every $A\subseteq Bad(\alpha,\frac{1}{\hepsilon})$,
$\left|N\left(A\right)\right|\geq\left|A\right|.$ Thus, we have
\begin{corollary}
\label{lemma:hall}
There exists a perfect matching in $G_{\alpha,\frac{1}{\hepsilon}}$.
\end{corollary}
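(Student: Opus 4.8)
The plan is to obtain the matching directly from Hall's marriage theorem \cite{H35}, using the preceding lemma as the sole input. Recall that Hall's theorem asserts that a bipartite graph admits a matching saturating one of its sides $X$ precisely when every subset $A \subseteq X$ satisfies $|N(A)| \geq |A|$. Here the relevant side is $Bad(\alpha,\frac{1}{\hepsilon})$, and the preceding lemma establishes exactly this deficiency condition for every $A \subseteq Bad(\alpha,\frac{1}{\hepsilon})$ in the graph $G_{\alpha,\frac{1}{\hepsilon}}$. Thus the entire combinatorial content has already been discharged, and the corollary is an immediate instantiation.

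Concretely, I would simply invoke Hall's theorem to conclude that $G_{\alpha,\frac{1}{\hepsilon}}$ contains a matching saturating every bad machine; each bad machine $i$ is thereby paired with a \emph{distinct} good machine $i'$ that is legal for its largest job $j_{max}^{i}$. This is precisely the matching required by the balancing phase of the algorithm: distinctness of the matched good machines guarantees that the largest jobs can be relocated simultaneously without two bad machines competing for the same target, and legality guarantees that each relocated job has bounded processing time on its new machine.

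The only point that merits care is the meaning of \emph{perfect matching} in this setting. Since $|Bad(\alpha,\frac{1}{\hepsilon})|$ and $|Good(\alpha,\frac{1}{\hepsilon})|$ need not be equal, what Hall's condition yields is a matching saturating the bad side rather than one saturating both sides; the one-sided form of Hall's theorem delivers exactly this, and that is all the subsequent argument requires. There is no genuine obstacle in this step — all the weight sits in verifying the neighborhood inequality $|N(A)| \geq |A|$, which was already proved in the previous lemma via the feasibility bound $\hepsilon \geq \frac{L}{T}$.
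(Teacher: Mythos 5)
Your proof is correct and matches the paper's own argument exactly: the corollary is an immediate application of Hall's theorem \cite{H35} to the neighborhood condition $\left|N(A)\right|\geq\left|A\right|$ established in the preceding lemma. Your clarification that ``perfect matching'' here means a matching saturating the side $Bad(\alpha,\frac{1}{\hepsilon})$ (the one-sided form of Hall's theorem) is a welcome precision the paper glosses over, but it does not change the substance of the argument.
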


By the above discussion, we can modify the assignment $\alpha$,
output by the algorithm of \cite{ST93}, by finding a perfect
matching in $G_{\alpha,\frac{1}{\hepsilon}}$ and then transferring jobs
from bad machines to their matching good machines. We describe this formally in algorithm $A_{UM}(\mathbf{P},\gamma,T,L)$.

\begin{algorithm}[H]
\caption{$A_{UM}(\mathbf{P},\gamma,T,L)$}
\begin{enumerate}
\item Solve the linear relaxation $LP(\mathbf{P},T,L)$.
\item Round the solution to obtain an integral assignment $\alpha$ using a rounding technique as
given in \cite{ST93}. \label{alg:step2}
\item If the feasibility factor $\hepsilon$ of $\mathbf{P}$ satisfies $\hepsilon\leq\frac{L}{T}$, then return the assignment $\alpha$.  \label{alg:step3}
\item Otherwise, construct the bipartite graph $G_{\alpha,\frac{1}{\hepsilon}}$
and find a perfect matching of size $\left|Bad(\mathbf{P},\alpha)\right|$.
\item Obtain a resulting assignment $\beta$ out of $\alpha$ by transferring
the longest job, $j_{max}^{i}$ from each machine $i\in Bad(\mathbf{P},\alpha)$,
to its matching machine $i^{'}\in Good(\mathbf{P},\alpha)$.
\item Return the new assignment $\beta$. \end{enumerate}
\end{algorithm}

\noindent
{\bf Proof of Theorem \ref{thm:approx_ratio}.}
We show that the assignment output by Algorithm $A_{UM}$
satisfies the statement of the theorem. Consider an instance $\mathbf{P}$.
By \cite{ST93}, if $LP(\mathbf{P},\mathbf{P},T,L)$ has a feasible solution
for $C\leq T$ then Step \ref{alg:step2}. is guaranteed to generate a schedule of makespan
at most $2T$ and average machine load $L$. Let $\alpha$ be the
resulting assignment.

Let $\hepsilon$ be the feasibility factor of $\mathbf{P}$. If $\hepsilon\leq\frac{C}{T}$
then we output $\alpha$ at Step \ref{alg:step3}, and indeed, we cannot guarantee a
makespan lower than $2T$ in this case.
Otherwise, by Corollary \ref{lemma:hall}, there exists a perfect matching in $G_{\alpha,\frac{1}{\hepsilon}}$.
Let $\left|Bad(\mathbf{P},\alpha)\right|=k $ and let $M=\left\{ (i_{b_{1}},i_{g_{1}}),...,(i_{b_{k}},i_{g_{k}})\right\} $
be a perfect matching in $G_{\alpha,\frac{1}{\hepsilon}}.$

By Lemma \ref{lemma:small-jobs_sum}, for any
machine $i=1,...,m$ the sum of processing times of all the jobs $j$
such that $j\in\left\{ j:\alpha(j)=i\right\} \setminus\left\{ j_{max}^{i}\right\} $
is at most $T$. Therefore, transferring the largest job $j_{max}^{i}$
to another machine guarantees that the new load of $i$ is at
most $T$.

As for the good machines, if $i$ is a good machine for job $j$ then
$p_{ij}\leq T$, then transferring $j$ to $i$ will increase the
load of $i$ at most by $T$. Since the load of a good machine is
at most $\frac{L}{\hepsilon}$, we have that after such job transfer
the load is at most $T+\frac{L}{\hepsilon}$.

We note that each pair $(i_{b_{s}},i_{g_{s}})\in M$ is a matching
between $i_{b_{s}}\in Bad$ and $i_{g_{s}}\in Good_{j_{max}^{i_{b_{s}}}}$
and therefore transferring the largest job $j_{max}^{i_{b_{s}}}$
from $i_{b_{s}}$ to $i_{g_{s}}$ guarantees that the resulting load on $i_{b_{s}}$
is at most $T$, and  the load on $i_{g_{s}}$ is at
most $T+\frac{L}{\hepsilon}$.

Thus, by performing the large-jobs transfers for all pairs $(i_{b_{s}},i_{g_{s}})\in M$, $s=1,...,k$,
we guarantee that each machine has load at most $T+\frac{L}{\hepsilon}$.
 \hfill $\qed$

\section{A Better Bound for the Restricted Assignment Problem}
\label{sec:restricted}
In this section we consider the restricted version of our problem, where $p_{ij}\in\left\{ p_{j},\infty\right\}$, for each job $j=1,...,n$, and each machine $i=1,..,m$.
For this variant, we show that improved approximation ratio can be achieved by a combinatorial algorithm.
In particular, applying a technique of Gairing et al. \cite{GLMM04}, we show that, by identifying highly feasible instances, we obtain an algorithm
which improves the $2$-approximation ratio guaranteed in \cite{GLMM04}, and also has better running time.
In the next section we give an overview of the algorithm of \cite{GLMM04}.

\subsection{Overview of the Algorithm of Gairing et al.}
\label{sec:Gairing overview}

We describe below an algorithm, called Unsplittable-Blocking-Flow, introduced in \cite{GLMM04}.
Let $I$ be an instance for the restricted assignment problem. Also, let
$w$ be a fixed positive integer and $\Delta$ a parameter (to be determined).
A $w$-feasible assignment $\alpha$ is an assignment with the property that each job $j$ is assigned to a machine $i$ where $p_{ij}\leq w$.

Let $\alpha$ be a $w$-feasible assignment. $G_\alpha(w)=(W\cup V, E_\alpha (w))$ is a directed bipartite graph where
 $W=\left\{ w_{j}:j=1,...,n\right\}$ consists of the job nodes, and $V=\left\{ v_{i}:i=1,..,m\right\}$ consists
of machine nodes. For any job node $j$ and any machine node $i$, if $\alpha(j)=i$ there is an arc in $E_\alpha(w)$ from $i$ to $j$; if $\alpha(j)\neq i$ and $j$ is feasible on machine $i$, i.e., $p_{ij}\leq w $, then there is an arc from $j$ to $i$.

Given a $w$-feasible assignment $\alpha$, the algorithm of  \cite{GLMM04} partitions the set of machines to
three subsets: ${\cal M}^+$ (overloaded), ${\cal M}^-$ (underloaded), and ${\cal M}^{0}$ (all the remaining machines).
Thus, 
${\cal M}={\cal M}^+\cup {\cal M}^-\cup {\cal M}^0$.
Given an assignment $\alpha$,
a machine $i\in {\cal M}^{+}$ is overloaded if the load on $i$ is at least
$ w+\Delta+1 $. A machine $i\in {\cal M}^{-}$ is underloaded if the load on $i$ is at most $\Delta $.
 The remaining machines, which are neither overloaded nor underloaded,
form the set ${\cal M}^{0}={\cal M}\smallsetminus\left({\cal M}^{-}\bigcup {\cal M}^{+}\right)$.

The algorithm Unsplittable-Blocking-Flow$({\cal M},\alpha, \Delta, w)$  starts with an initial $w$-feasible assignment of jobs to machines and iteratively improves the makespan until it obtains an assignment with makespan of $w+\Delta$,
or declares that an assignment of makespan $\Delta$ does not exist.
In each iteration, the algorithm finds an augmenting path from an overloaded
to an underloaded machine, and pushes jobs along this path, by performing a series of job reassignments 
between machines on that path. This results in balancing the load over the machines, i.e., reducing the
load of the source that is an overloaded machine, and increasing the load of the destination that is an underloaded machine, while preserving the load of all other machines.
Unsplittable-Blocking-Flow terminates after $O(mS)$ steps, where
$S=\Sigma_{j=1}^{n}|\left\{ i:p_{ij}<\infty\right\} |$.
For short, we call this algorithm below $\cal UBF$.

Algorithm $\cal UBF$ (with $w=p_{max}$), combined with a binary search over the possible range of values for $\Delta$, 
can be used to obtain the approximation ratio of $2-\frac{1}{p_{max}}$.
The running time of the approximation algorithm is then factored by a value that is logarithmic in the size of the range in which we search for $\Delta$.
Thus,
the  algorithm of \cite{GLMM04} computes an assignment having makespan within a factor of $2-\frac{1}{p_{max}}$ from the optimal in time $O(mSlogW)$, where $W=\Sigma_{j=1}^{n}p_{j}$.

\subsection{Approximation Algorithm}
\label{sec:Approximation algorithm}

Let $\cal I$ be an instance of the restricted assignment problem.
The feasibility factor of $\cal I$ is exactly $\hepsilon=\frac{min_{j}\left|\left\{ i:p_{ij}<\infty\right\} \right|}{m}$.
Consider the bipartite graph $G_\alpha(w)$ constructed by algorithm $\cal UBF$.
%
\comment{
\begin{observation}
\label{obs:load}
Let $w\geq max_j{p_j}$. Then any non partial assignment is $w$-feasible and the average machine load of any $w$-feasible assignment equals $\frac {1}{m}\Sigma_{j=1}^{n}p_{j}$.
\end{observation}
}

\comment{
\begin{observation}
The feasibility factor is exactly $\epsilon=\frac{min_{j}\left|\left\{ i:p_{ij}<\infty\right\} \right|}{m}$.
\end{observation}
}

Let $L=\frac{1}{m}\sum_{j=1}^{n}p_{j}$ be the  average machine load
of any schedule for  an instance of the restricted assignment problem.
In the following we define the three machine sets for the algorithms.
\begin{definition}
\label{def:machinesdef}
Let $\alpha$ be a $w$-feasible assignment  of a given instance  $\cal I$  of the restricted assignment problem, 
with feasibility factor $\hepsilon $. Then,

${\cal M}^{-}\left(\alpha\right)=\left\{ i:\delta_{i}({\cal I},\alpha)\leq \frac{L}{\hepsilon} \right\} $

${\cal M}^{0}\left(\alpha\right)=\left\{ i:\frac{L}{\hepsilon} <\delta_{i}({\cal I},\alpha)\leq w+\frac{L}{\hepsilon}\right\} $

${\cal M}^{+}\left(\alpha\right)=\left\{ i:\delta_{i}({\cal I},\alpha)> w+\frac{L}{\hepsilon}\right\} $
\end{definition}

Using this partition of machines into ${\cal M}^{0}$, ${\cal M}^{-}$ and
${\cal M}^{+}$, we apply the algorithm $\cal {UBF}$$({\cal M},\alpha,\frac{L}{\hepsilon}, p_{max})$ of \cite{GLMM04}.

Throughout the execution of $\cal UBF$, augmenting paths from machines
in  ${\cal M}^{+}$ to ${\cal M}^{-}$ are found iteratively. Along each of these paths, the algorithm reassigns jobs between machines. Applying the algorithm results in reducing the makespan and balancing the loads.
The algorithm continues as long as there exists a path from ${\cal M}^{+}$ to ${\cal M}^{-}$.

\begin{figure}[th]
\begin{center}
{\epsfig{file=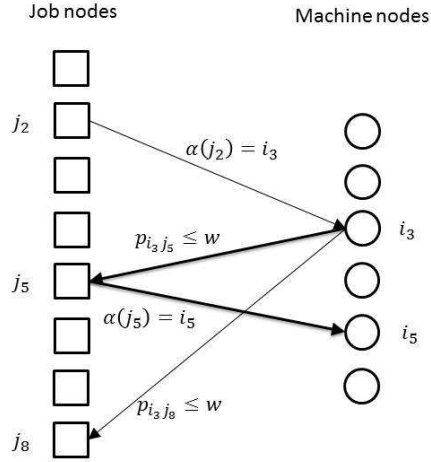,width=4.5in}}
\caption{\small The bipartite graph $G_\alpha (w)$. By changing the orientation of the path $i_{3}\rightarrow j_{5}\rightarrow i_{5}$, we remove job $j_{5}$ from $i_{5}$ and schedule it on machine $i_{3}$. }
\label{bipartiteGraph}
\end{center}
\end{figure}


\begin{theorem}
\label{thm:approx_ratio_restricted}
Let $\cal I$ be an instance of the restricted scheduling problem, with feasibility factor $\hepsilon \in \left(0,1\right] $.
 If there exists $0<q<\hepsilon $ such that $\frac{\sum_{j=1}^{n}p_{j}}{m}=q\cdot p_{max} $ then there 
exists a $\left(1+\frac{q}{\hepsilon} \right)$-approximation algorithm for the makespan, whose running time is 
$O(m^2n)$.
\end{theorem}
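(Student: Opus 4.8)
The plan is to run the algorithm ${\cal UBF}$ of \cite{GLMM04} with window $w=p_{max}$ and threshold $\Delta = L/\hepsilon$, using exactly the partition ${\cal M}^-,{\cal M}^0,{\cal M}^+$ of Definition~\ref{def:machinesdef}. Since $w=p_{max}=\max_j p_j$, every assignment is $w$-feasible, so any starting assignment is admissible and, crucially, no binary search over $\Delta$ is needed. I would then show that the algorithm terminates with ${\cal M}^+(\alpha)=\emptyset$, i.e.\ with every machine having load at most $w+\Delta = p_{max}+L/\hepsilon$; the theorem follows from the elementary bound $OPT \geq p_{max}$.

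The heart of the argument is to rule out the alternative outcome, in which ${\cal UBF}$ halts while some overloaded machine survives. Suppose for contradiction that ${\cal M}^+(\alpha)\neq\emptyset$ at termination, and let $R$ be the set of machines reachable from ${\cal M}^+(\alpha)$ by directed paths in $G_\alpha(w)$. Because the algorithm stops only when no augmenting path reaches an underloaded machine, $R\cap {\cal M}^-(\alpha)=\emptyset$, so every machine in $R$ lies in ${\cal M}^+(\alpha)\cup{\cal M}^0(\alpha)$ and hence carries load strictly greater than $\Delta=L/\hepsilon$. I would next exploit the orientation of the arcs: if $\alpha(j)=i$ with $i\in R$, then the arc $i\to j$ places $j$ in the reachable set, and the arcs $j\to i'$ to every machine $i'$ feasible for $j$ force all such $i'$ into $R$ as well. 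Thus every job currently placed on a machine of $R$ is feasible only on machines of $R$.

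Combining this closure property with the feasibility factor yields the size bound $|R|\geq \hepsilon m$: pick any job $j$ assigned to a machine of $R$ (one exists, since an overloaded machine carries positive load), and recall that $j$ is feasible on at least $\hepsilon m$ machines, all of which must lie in $R$. A load count then gives the contradiction, since the total load carried by $R$ is strictly greater than $|R|\cdot (L/\hepsilon) \geq \hepsilon m \cdot (L/\hepsilon) = mL = \sum_{j=1}^n p_j$, which is impossible because $\sum_{j=1}^n p_j$ is the total load over \emph{all} machines. Hence ${\cal M}^+(\alpha)=\emptyset$ and the makespan is at most $p_{max}+L/\hepsilon$.

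Finally, writing $L=q\,p_{max}$ gives $p_{max}+L/\hepsilon = p_{max}(1+q/\hepsilon)$, and since any schedule must place the largest job on some machine we have $OPT\geq p_{max}$, so the returned makespan is at most $(1+q/\hepsilon)\,OPT$. For the running time, ${\cal UBF}$ runs in $O(mS)$ with $S=\sum_j |\{i:p_{ij}<\infty\}| \leq mn$, i.e.\ $O(m^2 n)$; using the single fixed threshold $\Delta=L/\hepsilon$ instead of searching for it removes the $\log W$ factor incurred in \cite{GLMM04}. The main obstacle is the reachability/closure step: one must argue carefully from the arc orientations of $G_\alpha(w)$ that the feasible machines of every job sitting on $R$ remain inside $R$, as this is precisely what converts the feasibility factor into the decisive inequality $|R|\geq \hepsilon m$.
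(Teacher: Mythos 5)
Your proposal is correct and follows essentially the same route as the paper: fix $w=p_{max}$ and $\Delta=L/\hepsilon$, run ${\cal UBF}$ once (no binary search), and rule out a surviving overloaded machine by showing the reachable set $R$ satisfies $|R|\geq \hepsilon m$ via the feasibility factor, whence the total load would exceed $mL=\sum_j p_j$ --- exactly the paper's Lemma~\ref{lemma:UBF} argument, with only the cosmetic difference that you bound every machine of $R$ uniformly by $L/\hepsilon$ while the paper additionally uses the $w+L/\hepsilon$ load of the overloaded source. The concluding steps ($OPT\geq p_{max}$, makespan $p_{max}(1+q/\hepsilon)$, running time $O(mS)=O(m^2n)$ without the $\log W$ factor) also match the paper's proof of Theorem~\ref{thm:approx_ratio_restricted}.
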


The correctness of Theorem \ref{thm:approx_ratio_restricted} follows from the next lemma.

\begin{lemma}
\label{lemma:UBF}
Let $\cal I$ be an instance of the restricted scheduling problem, with feasibility factor $\hepsilon \in \left(0,1\right] $. 
Then Algorithm $\cal UBF$ takes time $O(mS)$, where 
$S=
\sum_{j=1}^{n}|\left\{ i:p_{ij}<\infty\right\} |
$.
 Furthermore, For an initial assignment  $\alpha $,  $\cal UBF$$({\cal M},\alpha,\frac{L}{\hepsilon}, w)$
 for $w\geq p_{max}$ and $L=\frac {1}{m}\Sigma_{j=1}^{n}p_{j}$ terminates with ${\cal M}^{+}=\phi$.
\end{lemma}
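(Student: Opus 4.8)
The plan is to treat the two assertions in turn. The running-time bound $O(mS)$ is exactly the guarantee stated in the overview of Section~\ref{sec:Gairing overview}: the blocking-flow implementation of $\cal UBF$ from \cite{GLMM04} finds augmenting paths in $O(m)$ phases, each costing $O(S)$ time, where $S=\sum_{j=1}^{n}|\{i:p_{ij}<\infty\}|$ is the number of feasible job--machine pairs (equivalently, the number of arcs of $G_\alpha(w)$). I would simply invoke this bound, noting that our particular choice of the thresholds $\Delta=\frac{L}{\hepsilon}$ and $w\ge p_{max}$ affects only the partition into ${\cal M}^-,{\cal M}^0,{\cal M}^+$ and not the graph $G_\alpha(w)$ or its number of arcs.

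For the second assertion I would argue by contradiction: suppose that when $\cal UBF$ halts there remains an overloaded machine, i.e.\ ${\cal M}^+\neq\phi$. Since $\cal UBF$ stops only when no augmenting path from ${\cal M}^+$ to ${\cal M}^-$ survives, let $R\subseteq{\cal M}$ be the set of all machines reachable from ${\cal M}^+$ along alternating paths in $G_\alpha(w)$. The termination condition forces $R\cap{\cal M}^-=\phi$, so every machine of $R$ lies in ${\cal M}^+\cup{\cal M}^0$ and hence, by Definition~\ref{def:machinesdef}, carries load strictly greater than $\frac{L}{\hepsilon}$.

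The crux is a reachability-closure property. If $i\in R$ and a job $j$ has $\alpha(j)=i$, then the arc $i\to j$ makes $j$ reachable, and hence every machine $i'$ feasible for $j$ (that is, $p_{i'j}<\infty$, which for $w\ge p_{max}$ is the same as $p_{i'j}\le w$) is reachable through the arc $j\to i'$ and so lies in $R$. Thus every job currently assigned to a machine of $R$ is feasible only on machines of $R$. Because ${\cal M}^+$ is nonempty it holds at least one job, so such jobs exist, and since each job is feasible on at least $\hepsilon m$ machines by the definition of the feasibility factor, we obtain $|R|\ge\hepsilon m$. A load count then yields
\[
mL=\sum_{i\in{\cal M}}\delta_i({\cal I},\alpha)\ \ge\ \sum_{i\in R}\delta_i({\cal I},\alpha)\ >\ |R|\cdot\frac{L}{\hepsilon}\ \ge\ \hepsilon m\cdot\frac{L}{\hepsilon}\ =\ mL,
\]
where the first equality uses $\sum_{j}p_j=mL$ and the last inequality uses $|R|\ge\hepsilon m$. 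The resulting strict inequality $mL>mL$ is impossible, so ${\cal M}^+=\phi$ at termination.

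I expect the main obstacle to be making the reachability-closure step fully rigorous: one must check that the job reassignments performed by $\cal UBF$ conserve the total load, so that $\sum_{i}\delta_i({\cal I},\alpha)=mL$ holds at every step (and in particular at termination), and that ``no surviving path from ${\cal M}^+$ to ${\cal M}^-$'' indeed means the reachable set $R$ avoids ${\cal M}^-$ entirely rather than merely that some specific path was exhausted. Once these points are pinned down, the feasibility bound $|R|\ge\hepsilon m$ and the one-line load count close the argument.
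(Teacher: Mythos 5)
Your proposal is correct and takes essentially the same route as the paper's own proof: assume ${\cal M}^+\neq\emptyset$ at termination, use the termination condition to show the reachable set avoids ${\cal M}^-$, lower-bound its size by $\hepsilon m$ via a job on an overloaded machine together with $w\geq p_{max}$, and derive the contradiction that the total load strictly exceeds $mL$ (with the $O(mS)$ running time cited from \cite{GLMM04} in both). The only cosmetic difference is that the paper takes the set ${\cal M}_v$ of machines reachable from a single machine $v\in{\cal M}^+$ and counts $\delta_v> w+\frac{L}{\hepsilon}$ separately, whereas you close up all of ${\cal M}^+$ and bound every reachable machine's load by $\frac{L}{\hepsilon}$; both computations yield the same contradiction.
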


\begin{proof}

Let $w\geq p_{max}$. We show that $\cal UBF$$({\cal M},\alpha,\frac{L}{\hepsilon}, w)$ terminates with $M^{+}=\phi$.
Let $\beta$ be the assignment computed by $\cal UBF$$({\cal M},\alpha,\frac{L}{\hepsilon}, w)$. By the observation, the average machine load of $\beta $ is $L$.
Assume that ${\cal M}^{+}\neq\emptyset$. Then there exists a machine $v$ with load $\delta_{v}({\cal I},\beta)> w+\frac{L}{\hepsilon}$. Since $\cal UBF$ terminated, we know that there is no path from a machine in ${\cal M}^{+}$ to a machine in ${\cal M}^{-}$ in the graph $G_\beta(w)$. Denote by ${\cal M}_v$ the set of machines reachable from $v$, ${\cal M}_{v}=\left\{ i\in M:\mbox{there is a directed path in \ensuremath{G_{\beta}(w)} from \ensuremath{v} to \ensuremath{i} }\right\}
 $.

Obviously, there is at least one job $u$ which is assigned to $v$ in $\beta$.
Thus, there is an edge of $\left(v,u\right)$ in $G_\beta(w)$. If $\hepsilon$ is the feasibility factor of $\cal I$, then there exists
at least $\hepsilon m$ machines that are good for $u$, which means that there exists an edge from $u$ to each of them. By
appending each of these edges to $\left(v,u\right)$ we get a directed path
from $v$ to at least $\hepsilon m$ machines (including $v$). Therefore
we can conclude that $\left|{\cal M}_v\right|\geq\hepsilon m$.

We compute now a lower bound of the average machine load  for $\beta$.
We sum the loads of all the machines
$i\in {\cal M}$. We know that $v\in {\cal M}^{+}$, thus $\delta_{v}({\cal I},\beta)> w+\frac{L}{\hepsilon}$. We also know
that there is no path from $v$ to machines in ${\cal M}^{-}$, and therefore
${\cal M}_v\cap {\cal M}^{-}=\emptyset$. Thus, for all $i\in {\cal M}_v$
$\delta_{i}({\cal I},\beta)>\frac{L}{\hepsilon}$ holds.

\[
\begin{array}{ll}
\sum_{i\in {\cal M}}\delta_{i}({\cal I},\beta) & \geq  \delta_{v}({\cal I},\beta)+\sum_{i\in {{\cal M}_v},i\neq v}\delta_{i}({\cal I},\beta)\\
 & > w+\frac{L}{\hepsilon}+\left(\left|{\cal M}_v\right|-1\right)\frac{L}{\hepsilon}\\
  & = w+\left|{\cal M}_v\right|\left(\frac{L}{\hepsilon}\right)\\
  & \geq w+mh\left(\frac{L}{\hepsilon}\right)\\
  & = w+mL\\
  & > mL \\
\end{array}
\]

We have shown that the sum of loads of the assignment $\beta$ is greater than $mL$. Hence, the average load for $\beta$ is greater then $L$ in contradiction to the average load of the resulting assignment $\beta$ being exactly $L$.
For the analysis of the running time, see \cite{GLMM04}.
\hfill \qed
\end{proof}

\noindent
{\bf Proof of Theorem \ref{thm:approx_ratio_restricted}.}
Let $\cal I$ be an instance of the restricted scheduling problem with feasibility factor $\hepsilon $.
Let $w=p_{max}$ and let $\alpha$ be some initial $w$-feasible assignment. The average machine load of $\alpha $ 
is $L=\frac {1}{m}\Sigma_{j=1}^{n}p_{j}$.
By Lemma \ref{lemma:UBF}, when $\cal UBF$$({\cal M},\alpha,\frac{L}{\hepsilon}, w)$
 terminates, we have that ${\cal M}^{+}=\emptyset $, i.e.,  
all machines are in ${\cal M}^{0}$ or in ${\cal M}^{-}$. 
Therefore, the maximum load of the resulting assignment is at most
 $w+\frac{L}{\hepsilon}=p_{max}+\frac{L}{\hepsilon}=p_{max}\left( 1+\frac{L}{\hepsilon p_{max}} \right)$.
Since the optimal makespan satisfies $OPT \geq p_{max}$, and $q=\frac{L}{ p_{max}}$, we have an approximation
ratio of $1+\frac{q}{\hepsilon}<2$. By Lemma \ref{lemma:UBF}, the running time of our algorithm is
$O(mS) = O(m^2 n)$.
\hfill  \qed

Note that our algorithm has better running time than the algorithm of \cite{GLMM04}, which 
uses binary search to find the best value for $\Delta$ yielding a $2$-approximation ratio to the minimum makespan.
This results in an overall running time of $O(mSlogW)$, where $W=\sum_{j=1}^{n}p_{j}$ is the sum of processing times of
all jobs.

\comment{
In the case where $\frac{L}{\epsilon}\geq T$ we just apply the same algorithm as in \cite{GLMM04} which guarantees an assignment with makespan at most $2T$. Therefore, our algorithm yields an assignment with makespan at most  $min\left\{ T+\frac{L}{\epsilon},2T\right\}$.
}

\comment{
\subsection{Uniform Restricted Instances}
In this section we consider the case where each job is feasible on exactly $k$ machines, for some $1\leq k\leq m$. If $\epsilon $ is the feasibility factor of such instances, then $k=\epsilon m$. We call this case \em{the uniform restricted case}.

\begin{lemma}
Let $\mathbf{P}$ be a uniform restricted instance and let $\epsilon$ be its feasibility factor. Then $\frac{L}{\epsilon}$ is a lower bound on the optimum makespan.
\end{lemma}

\begin{proof}

Partition the set of machines into $\frac{m}{k}$ sets $M_1,M_2,...M_{\frac{m}{k}}$  such that all the jobs
\hfill \qed
\end{proof}

\section{Discussion and Open Problems}
}

\medskip
\myparagraph{Acknowledgments} We thank Rohit Khandekar and Baruch Schieber for helpful discussions
on the paper.

\bibliographystyle{splncs}



\end{document}